\documentclass[a4paper,10pt]{article}

\usepackage[english]{babel}
\usepackage[T1]{fontenc}
\usepackage[utf8]{inputenc}
\usepackage{units}
\usepackage[margin=1.25in]{geometry}

\usepackage{amsmath}
\usepackage{afterpage}

\usepackage{graphicx}
\usepackage{subfigure}
\usepackage{wrapfig}
\usepackage{indentfirst}
\usepackage{latexsym}
\usepackage{sidecap}
\usepackage{booktabs}
\usepackage{amsthm}
\usepackage{amssymb}

\usepackage{setspace}

\usepackage{mathrsfs}
\usepackage{textcomp}
\usepackage{braket}
\usepackage{bbm}
\usepackage{colortbl}
\usepackage{bbm}
\usepackage{comment}
\usepackage[colorinlistoftodos]{todonotes}
\usepackage{breakcites}

\oddsidemargin=20pt
\evensidemargin=16pt

\theoremstyle{plain}
\newtheorem{Theorem}{Theorem}[section]
\theoremstyle{definition}
\newtheorem{Definition}{Definition}[Theorem]
\theoremstyle{remark}
\newtheorem{Remark}[Theorem]{Remark}
\theoremstyle{remark}

\theoremstyle{plain}
\newtheorem{Lemma}[Theorem]{Lemma}
\theoremstyle{plain}

\theoremstyle{plain}

\theoremstyle{remark}

\theoremstyle{remark}

\theoremstyle{remark}

\theoremstyle{remark}

\renewcommand\vec[1]{\overrightarrow{#1}}
\newcommand\cev[1]{\overleftarrow{#1}}
\newcommand\RR{\mathbb{R}}
\newcommand\NN{\mathbb{N}}

\usepackage{comment}
\usepackage[colorinlistoftodos]{todonotes}

\newcommand{\Ind}[1]{\mathbbm{1}_{\left [#1\right ]}}

\newcommand{\GG}{\mathbb{G}}

\newcommand{\Ver}{\mathrm{v}}

\usepackage{fancyhdr}
\usepackage{calc} 
\pagestyle{fancy}

\fancyhf{} 
\fancyhead[LE,RO]{\bfseries\thepage} 
\fancyhead[LO]{\nouppercase{\rightmark}}
\fancyhead[RE]{\nouppercase{\leftmark}}

\oddsidemargin=20pt
\evensidemargin=16pt

\begin{document}

\date{}

\title{A lending scheme for a system of interconnected banks with probabilistic constraints of failure}

\author{Francesco Cordoni$^1$, Luca Di Persio$^2$ and Luca Prezioso$^3$\\
$^2$ University of Verona - Department of Computer Science\\ Strada le Grazie, 15 Verona, 37134-ITALY\\
e-mail: francescogiuseppe.cordoni@univr.it
\\[2pt]
$^2$ University of Verona - Department of Computer Science\\ Strada le Grazie, 15 Verona, 37134-ITALY\\
e-mail: luca.dipersio@univr.it
\\[2pt]
$^2$ University of Trento - Department of Mathematics \\ Via Sommarive, 14 Trento, 38123-ITALY\\
e-mail: luca.prezioso@unitn.it
\\[2pt]
}

\maketitle

\begin{abstract}
We derive a closed form solution for an optimal control problem related to an interbank  lending schemes subject to terminal probability constraints on the failure of banks which are interconnected through a financial network. The derived solution applies to a real banks network by obtaining a general solution when the aforementioned probability constraints are assumed for all the banks. We also present a direct method to compute the systemic relevance parameter for each bank within the network. 
%In particular, latter parameter can reduce the cascade of failures in the network, since, when  linked to the failure accepted probability, it plays a key role for the optimal control strategy adopted by financial supervisor aiming at (financially) control the banks network.
%When linked to the accepted probability of failure, the latter parameter has a key role for the optimal control strategy of a financial supervisor aiming to control the system, and can reduce the cascade effect of failures in the whole network.
\end{abstract}

%\end{frontmatter} 

\section{Introduction}
One of the most relevant changes within the financial world has been caused by the worldwide crisis of 2007-2008. Starting from that breaking event, financial analysts, bank practitioners, applied mathematicians and economists, have been pushed to rethink the models they were used to work with. In particular, it was necessary to stop relying on a series of assumptions turned out to be too far from real markets, as well as from the {\it new} altered financial worldwide scenario and its changed functioning. %In particular, such a {\it big crunch} along with its conseuquences within  worldwide markets,  forced investors to be aware
%One of the major difference with respect to the past, being the awareness 
%that almost every financial quantity is exposed to a failure risk.
Such a {\it big crunch}, along with its consequences,  forced both investors and financial institutions to be aware that almost every financial quantity is exposed to concrete failure risk.\\
As an example, the standard Black and Scholes (BS) model, whose restrictions on coefficients have been the focus of several studies determining a plethora of alternative and effective approaches, see, e.g., \cite{Mer1}, has shown evident limits. In particular, as in the BS model, we base our framework on the geometric Brownian motion. Nevertheless, a major focus of our work will be on default probabilities that any financial entity must face.\\
% we question the accuracy of its assumption that the financial institutions are non-defaultable at the terminal time. In fact, the financial crisis has simply wiped out the latter assumption. Therefore, a key point  of the present work will be investigate such a possibility.\\ %is the possibility that a bankruptcy scenario could lead to further default events in the considered network.
%In fact, it is based on assuming a fixed terminal time, which the mentioned crisis has simply wiped out. %Consequently, such lack has rapidly become one of the key point considered by financial players.\\
We underline that such {\it credit risk} analysis has seen an increasing interest in the theoretical financial community, pushing  the development of mathematically rigorous models  which take into account both the {\it risk exposure} factor and related  {\it default events}.
Along aforementioned lines, two main approaches have
been developed: the totally unexpected failure method, also known as {\it reduced-form intensity-based model}, see, e.g., ~\cite[ch.~8]{BR}, and the triggered failure method, also known as {\it structural model}, see, e.g., ~\cite[sec.~1.4]{BR}. 
Mathematically speaking, the first approach defines the default time as the first jump time of some stochastic process, so that the default event is completely inaccessible to the probabilistic reference filtration modeling the information flow available to traders.
After exogenously specifying the conditional probability of default, a typical method to deal with this inaccessibility issue
is based on filtration enlargement, see, e.g., \cite{BR,Lan,Rog}.\\
The second approach supposes the default event to be triggered as soon as the value of the financial entity reaches an endogenous lower threshold. Hence, one of the main issues of the method is the evolution modeling of both the financial entity value and of its capital structure. Therefore, differently from the first mentioned approach, the default time results in being a predictable stopping time with respect to the reference filtration. Let us recall that structural default risk models have been extensively studied in literature, see, e.g.,~\cite{BR,Bla,Mer,Mug}.
In what follows we focus our attention on the latter approach, taking into consideration a network of interconnected financial entities, such as banks or general economic agents, who are willing to lend money to each other. 
We assume that the {\it bankruptcy event } for a bank occurs when its capital hits a lower barrier whose value is linked to the characterization of the whole system.
%The failure event for each agent, being represented by its capital hitting  a barrier whose value is derived according to the general setting characterizing the whole system. 
As a main reference setting we refer to the one introduced in \cite{Eis}, then generalized in \cite{Cap,Lip,Rog}. In particular, following \cite{Cap}, we consider a {\it financial supervisor}, usually referred as {\it lender of last resort} (LOLR) aiming at guarantee the {\it wellness} of the financial network, by lending money to those agents who are near to default. At the same time, the LOLR also tries to minimize a given cost function.\\
%To the best of our knowledge, and within the aforementioned financial setting, our paper is the first providing an analytic solution, in closed form, to an optimal control problem with terminal probability constraints.
%, and can be the first step for extensive future investigations. Furthermore, this closed form solution makes 
Our results also allows  %to precisely analyze the problem for systems of such problems, by guaranteeing 
to compute the optimal controls for highly complex networks, as the real banking ones. The main novelty of our solution is that, in addition to considering a LOLR who lends money aiming at minimizing a given cost function, we further  assume fixed probability constraints the banks have to satisfy at a specific terminal time. From a financial point of view, such constraint implies that the LOLR optimal strategy has to be derived satisfying the  assumption that each bank is characterized by a probability of bankruptcy. As in \cite{Mer}, we assume that a bank may fail only at a fixed terminal time, namely it goes under bankruptcy if, at terminal time,  its wealth is below a given threshold.
This allows us to derive the optimal strategy exploiting techniques related to stochastic target problems. We recall that 
first results in this direction have been derived in \cite{Son}, where an {\it ad hoc} dynamic programming principle has been provided. Later, several papers appeared generalizing such results by considering different constraints schemes, spanning from expectation constraints at fixed time, to almost sure constraints, see, e.g., \cite{Bok,Bou,Bou2,Bou3,Gru}. In \cite{Ozg}, an optimal solution is derived within a similar setting, but without using the stochastic target problem approach. Since in the above mentioned papers examples of  concrete solutions are often missing,
%In the above setting most of the results are purely theoretical results, with very few example of concrete solutions. Also we should mention a first paper that 
%Let us also recall that in \cite{Ozg} an optimal solution is derived within a similar setting, but without using the stochastic target problem approach.
%, \cite{Ozg}. Based on above theory, we will derive a closed form solution for an optimal control subject to some probabilistic constraints at terminal time. \\
at the end of this work we consider an example. In particular, we compare our result with the one obtained in \cite{Cap} 
limiting, for the sake of clarity,  ourselves to a small set of interconnected banks, the case of larger network being of easy derivation. 
% also comparing our result with the one obtained in \cite{Cap}. 
Moreover, because  the model construction is strongly based on the mathematical theory of networks, we will exploit its characteristics in order to derive a  \textit{page rank} approach, first introduced in \cite{Page}, which will be used to determine the {\it relative importance} of any bank in the network. We then exploit this quantity to decide the admitted probability of each bank's failure, requiring that {\it important banks} %with higher importance in the network 
have larger {\it non-failure probability}, hence adopting a {\it too big to fail paradigma} .\\
The present work is organized as follows: in Section \ref{SEC:GS} we introduce the main setting, giving the  mathematical and financial definitions; in Section \ref{SEC:PC} we  introduce the optimal control problem with probability constraints and we provide its solution;  in Section \ref{Appl} we present the Pagerank method for the relative importance of the banks in the network and we apply the derived results to a toy example.

\section{The general setting}\label{SEC:GS}

%We postponed in appendix \ref{G:framework} the description of the financial framework, as proposed in \cite{Eis,Rog}.\\
Following the financial network setting proposed in \cite{Eis,Rog},  see appendix \ref{G:framework} for further details,
we consider a network composed by $n$ nodes, each of them representing a different financial agent, and we denote by $X^i(t)$ the asset value of the $i^\text{th}$ agent at time $t \in [0,T]$, being $T<\infty$ a fixed positive terminal time. 
Each node may have nominal liabilities to other nodes directly connected with it. In this case, we denote by $L_{i,j}(t)$ the payment that the bank $i$ owes to the bank $j$, at time $t \in [0,T]$. Then, we introduce the time-dependent \textit{liabilities matrix} $\mathcal{L}(t) = \left (L_{i,j}(t)\right )_{n \times n}$, defined as
\begin{equation}\label{liability}
\begin{cases}
L_{i,j} (t) & \iota^+_{i,j} \not = 0\, ,\\
0 & \mbox{ otherwise }\, ,
\end{cases}
\end{equation}
where, as shown in appendix \ref{G:framework}, $\iota^+_{i,j}$ is equal to one if  $i$ and $j$ are connected, while it equals zero otherwise. In particular, equation \eqref{liability} explicitly states  that there cannot be any cash flow between any two banks which are not edge-connected.\\
At any time $t \in [0,T]$ , the $i^\text{th}$ agent may also have an exogenous cash inflow $F^i(t) \geq 0$.\\
We will denote by $u_i(t)$ the payment made at time $t \in [0,T]$ by the $i^\text{th}$ bank, whereas $\bar{u}_i(t) = \sum_{j=1}^n L_{i,j}(t)$ is the \textit{total nominal obligation} of node $i$ towards all other nodes. Therefore, if $\bar{u}_i(t) = u_i(t)$, then $i$ has satisfied all its liabilities. We also introduce the \textit{relative liabilities matrix} $\Pi (t) = \left (\pi_{i,j}(t\right ))$ defined as
\[
\begin{cases}
\frac{L_{i,j} (t)}{\bar{u}_i(t)} & \bar{u}_i(t) > 0\, ,\\
0 & \mbox{ otherwise }\, .
\end{cases}
\]
Let us notice that the matrix $\Pi (t)$ is row stochastic, in the sense that $\sum_{j=1}^n \pi_{i,j} (t) = 1$, so that $\pi_{i,j}(t)$ represents the proportion of the total debt at time $t$ that the node $i$ owes to the node $j$.\\
Similarly, we can define the cash inflow of the node $i$ as the sum of the exogenous cash inflow $F^i(t)$ plus the total payment that node $i$ receives at time $t$ by other nodes, that is $\sum_{j=1}^n \pi_{i,j}^T(t)\, u_j(t)$ %i.e. $\sum_{j=1}^n u_j(t)\Pi_{j,i}(t) $, 
, where we denoted the transposed of the relative liabilities matrix and its elements as $\Pi^T=(\pi_{i,j}^T(t))$.
%, fixing the index $i=1,\dots,n$, we refer to $ \Pi_{i,j}^T(t)$ as the $i^\text{th}$ row vector of the matrix $\Pi^T$.\\
We thus have that the value of the $i^\text{th}$ node at time $t \in [0,T]$ is given by
\begin{equation}\label{EQN:ValB}
\bar{V}^i(t) =  \sum_{j=1}^n \pi_{i,j}^T(t)\, u_j(t) + F^i(t) - \bar{u}_i(t)\, .
\end{equation}
Let us now introduce the notion of \textit{clearing vector} as a specification of the payments made by each of the banks in the financial system, see, e.g.,  \cite[Definition 1]{Eis},   \cite[Definition 2.6]{Rog}.
In what follows, if not otherwise specified, we will use standard point-wise ordering for vectors in $\RR^n$, namely for every $x$, $y \in \RR^n$ it holds $x \leq y$ if and only of $x_i \leq y_i$, for any $i=1,\dots,n$.
\begin{Definition}\label{DEF:CV}
In the aforementioned financial setting, see also appendix \ref{G:framework}, a \textit{clearing vector} is a vector $u^* (t) \in [0,\bar{u}(t)]$ satisfying 
\begin{itemize}
\item \textbf{Limited liabilities:}
\[
u^*_i (t) \leq \sum_{j=1}^n \pi_{i,j}^T(t)\, u_j^*(t) + F^i(t)\, , \quad i = 1, \dots, n \, ;
\]
\item \textbf{Absolute priority:} that is either obligations are paid in full, or all value of the node is paid to creditors, i.e. 
\[
u^*_i (t) = \begin{cases}\bar{u}_i(t) \text{, if } \bar{u}_i(t) \leq  \sum_{j=1}^n \pi_{i,j}^T(t)\, u_j^*(t) + F^i(t)\\ \sum_{j=1}^n \pi_{i,j}^T(t)\, u_j^*(t) + F^i(t) \text{, otherwise}.\end{cases}
\]
\end{itemize}
\end{Definition}
Existence and uniqueness of a \textit{clearing vector}, in the sense of Definition \ref{DEF:CV},
% and  for a system alike the one described above 
is treated in \cite{Eis,Rog}. In particular, in \cite{Eis} it is shown that $u^*(t)$ is a clearing vector if and only if 
\begin{equation}\label{EQN:CV}
u^*(t) = \bar{u}_i(t) \wedge \left (\sum_{j=1}^n \pi_{i,j}^T(t)\, u_j^*(t) + F^i(t)\right )\, .
\end{equation}
Equation \eqref{EQN:CV} can be interpreted as follows: the term $\bar{u}_i(t)$ specifies which $i-node$ owes to the other nodes at time $t \in [0,T]$, whereas the second term $\left (\sum_{j=1}^n \left (\pi_{i,j}(t)\right )^T u_j^*(t) + F^i(t)\right )$ represents the cash inflow for the node $i$ at time $t \in [0,T]$. 
%In this setting a 
Consequently, \textit{clearing vector} represents the payment at time $t$ of each node:  each node pays the minimum between what it has and what it owes. Combining equation \eqref{EQN:ValB} and  \eqref{EQN:CV}, we say that the bank $i$ is in {\it default} if it is not able to meet all of its obligations, therefore
%Consequently, %so that we have for 
the value of a bank equals
\begin{equation}\label{EQN:ValBDef}
V^i(t) = \left ( \sum_{j=1}^n \pi_{i,j}^T(t)\, \bar{u}_j(t) + F^i(t) - \bar{u}_i(t)\right )^+\, ,
\end{equation}
where $(f(x))^+$ denotes the positive part of the function $f$, so that if $\bar{V}^i(t) \leq 0$, then the bank $i$ is in default, and we set its value to $V^i(t)=0$.\\
To simplify the notation, let us define the matrix 
\[
\tilde{L} = \left (\tilde{L}_{i,j}\right )_{n \times n} := L - \text{diag}( u(t)),
\]
where  diag$(u(t))$ indicates a $n\times n$ diagonal matrix with the vector $u(t):= (u_1(t),\dots, u_n(t))$ as its diagonal. The matrix $\tilde{L}$ has entry $L_{i,j}(t)$ off diagonal, and $-\sum_{j=1}^n L_{i,j}(t)$, representing the total payment that the bank $i$ owes at time $t$ to other nodes, on the main diagonal.\\
Following  \cite{Lip}, we assume the liabilities between banks to evolve according the following equation
\begin{equation}\label{EQN:Liab}
\frac{d}{dt} L_{i,j} (t) = \mu_{ij} L_{i,j} (t)\, ,
\end{equation}
for a fixed positive growth rate $\mu >0$.%, non necessarily risk-neutral. 
We stress that the %choice of deterministic liabilities is motivated by \cite{Lip}, nonetheless the 
present setting can be generalized taking $L$ as a geometric Brownian motion. In such scenario the terminal constraint becomes stochastic. Nonetheless, computing the conditional expectation of the terminal constraint, it is possible to recover results analogous the the setting used in the present paper. We leave this topic to be addressed in a future work. Similarly, we assume the bank $i$, at any time $t$, invests the difference between cash inflow ans cash outflow in an exogenous asset $X^i(t)$ whose dynamic is given by
%evolving according to the following stochastic differential equation
\[
\mathrm{d} X^i(t) = X^i(t) \left (\mu^i \,\mathrm{d}t + \sigma_i\, \mathrm{d}W^i(t)\right )\, , \quad i =1, \dots, n\, .
\]
Moreover, see \cite{Lip}, we introduce continuous (deterministic) default boundaries as follows
\[
X^i (t) \leq v^i(t)\,,\quad\mathbb{P}~a.s.\,,
\]
with
\begin{equation}\label{EQN:ContDet}
v^i(t) :=
\begin{cases}
R^i\left (\bar{u}_i(t)- \sum_{j=1}^n \pi_{i,j}^T(t)\, \bar{u}_j(t)\right ) & t< T\, ,\\
\bar{u}_i(t)- \sum_{j=1}^n \pi_{i,j}^T(t)\, \bar{u}_j(t) & t = T\, ,
\end{cases}
\end{equation}
where $R^i \in (0,1)$, $i = 1,\dots,n$, are suitable constants representing the \textit{recovery rate} of the bank $i$.

\section{The stochastic optimal control with probability constraints}\label{SEC:PC}

In what follows, we introduce the mathematical formulation of our problem, expressing it as an optimal control problem with terminal probability constraint. Furthermore, we provide an analytic solution which allows 
us
%gives us the theoretical tools 
to compute the optimal controls.\\
In what follow we consider a complete filtered probability space $\left (\Omega,\mathcal{F},\left (\mathcal{F}_t\right )_{t \geq 0},\mathbb{P}\right )$ satisfying usual assumptions, namely right--continuity and saturation by $\mathbb{P}$--null sets. Next results will be applied in Section \ref{Appl} to analyse some  financial networks tou models.
As in the paper by Capponi et al. \cite{Cap}, we consider a financial supervisor, called \textit{Lender Of Last Resort} (LOLR), connected to any node belonging to the financial network. The LOLR aims at saving the network from default, ant it is  assumed to have {\it full information} about the network state. In particular, at any time $t$ the LOLR can lend money to the bank $i$, $i = 1,\dots,n$, so that the controlled evolution of the bank $i$ satisfies %equation
\begin{equation}\label{EQN:XTau}
\begin{split}
\mathrm{d}X^i_\alpha(t) &= \left (\mu^i\, X_\alpha^i(t)  + \alpha^i(t) \right ) \mathrm{d}t + \sigma_i\, X_\alpha^i(t)\,\mathrm{d}W^i(t)\, ,
\end{split}
\end{equation}
being $\alpha^i(t)$ the loan from the LOLR to the bank $i$, at time $t \in [0,T]$ and such that $\alpha \in \mathcal{A}$, where $\mathcal{A}$ is the collection of progressively measurable processes $\alpha \in L^2([0,T])$, $\mathbb{P}-a.s.\,$. In particular, $\alpha(t)$-vector components represent the amounts of money lent to each bank by the LOLR.\\
In order to derive a closed form solution, we will consider the setting proposed originally by Merton in \cite{Mer}. Therefore,  we assume that default can happen only at some fixed time $t_i$, $i=1,\dots,l$, $l<\infty$, hence allowing to only consider constraints defined at terminal time. This allows to avoid introducing  strong %constraints holding 
bonds at each time $t \in [0,T]$.\\
%In particular, for a fixed terminal time $T$, 
%following \cite{Mer}, 
%the bank $i$ can fail only at terminal time.
Let us note  that an analogous result can be obtained considering  banks allowed to fail at some discrete times $t_1<t_2<\dots<t_M =T$, by
%if one is willing to 
separately considering any control problem between two fixed time $[t_i,t_{i+1}]$. This  allows to obtain a global control solution by  gluing together an
%solutions of 
an ordered sequence of optimal control problems, then exploiting results presented along subsequent sections. Nevertheless, 
%We would underline that in general, %such an assumption implies that 
the obtained glued solution is not the optimal one. In fact, in deriving the optimal solution for any time $t$, one has to consider also possible future evolutions of the system. We shall study latter scenario %The above consideration will be subject of investigation 
in a future research exploiting the results here provided, hence deriving the global optimal solution via {\it backward induction}, as addressed in 
%ing what has been done in  
\cite{CDP,Pha}. Assuming  that the LOLR aims at minimizing lend resources implies that he tries to minimize the functional
\begin{equation}\label{EQN:CostFunc}
J(\alpha) = \mathbb{E}\biggl[\frac{1}{2} \sum_{i=1}^n\int_{0}^{T} \alpha^i(s)^2\, \mathrm{d}s\biggr]\, .
\end{equation}
Moreover, the LOLR minimizes equation \eqref{EQN:CostFunc} over the probabilistic constraint
\begin{equation}\label{EQN:ProbC}
\mathbb{P}\left (X^i(T) \geq v^i(T)\right ) \geq q^i\,, \quad i=1,\dots,n \, , 
\end{equation}
for  suitable constants $q^i \in (0,1)$, $i=1,\dots,n$. For the ease of notation, in what follows we will drop the index $i$. Hence, with respect to the agent $i$, we are going to solve the general control problem, then we apply such result to all banks in the system.\\
Therefore, let us consider the value of a bank evolving over time according to
\begin{equation}\label{EQN:BankPC}
\begin{split}
\mathrm{d}X^i (t) &= \left (\mu^i\, X^i(t)  + \alpha^i(t) \right ) \mathrm{d}t + \sigma^i\, X^i(t)\,\mathrm{d}W^i(t)\, ,\\
X^i(0)&=x^i\,,\quad i=1\,,\dots,n\,,
\end{split}
\end{equation}
and the corresponding default value $v^i:=v^i(T)$ at terminal time. Moreover, in what follows, we shall  assume that the external supervisor chooses the control $\alpha$ minimizing the following criterion
\begin{gather}
J(t,\alpha) = \mathbb{E}\biggl[\frac{1}{2} \sum_{i=1}^n\int_{t}^{T} \alpha_i(s)^2\, \mathrm{d}s\,\biggr|\;\mathcal{F}_t\biggr]\,,\label{EQN:CrirMin}\\
s.t.\quad \mathbb{P}\left (X^i(T) \geq v^i|\mathcal{F}_t\right ) \geq q^i\,,\quad i = 1,\,	\dots n\,.\tag{PC}\label{EQN:PC}
\end{gather}

%%%%%%%%%%%%%%%%%
%%%%%%%%%%%%%%%%%
%%%%%%%%%%%%%%%%%
%%%%%%%%%%%%%%%%%
%%%%%%%%%%%%%%%%%
%%%%%%%%%%%%%%%%%
%%%%%%%%%%%%%%%%%
%%%%%%%%%%%%%%%%%
%%%%%%%%%%%%%%%%%
%%%%%%%%%%%%%%%%%
%%%%%%%%%%%%%%%%%
%%%%%%%%%%%%%%%%%
%%%%%%%%%%%%%%%%%
%%%%%%%%%%%%%%%%%
%%%%%%%%%%%%%%%%%
\subsection{Reduction to a stochastic target problem}\label{SEC:ST}
In the current section we are going to formally introduce the \textit{Hamilton--Jacobi--Bellman} (HJB) equation associated to the  control problem defined in eq. \eqref{EQN:CostFunc}, subject to constraint given by eq. \eqref{EQN:ProbC}, hence reducing the related optimal control problem to a stochastic target one. We stress that in what follows, due to the structure of the optimal control problem, we will focus on single agent $i$. In particular, to avoid heavy notation, if not otherwise stated, we will denote for short $X:=X^i$.\\
%The statement of an equivalent stochastic optimal control will allow us to derive a HJB equation.\\
Exploiting the value function form given by eq. \eqref{EQN:CrirMin} and by rewriting the terminal probability in equation \eqref{EQN:PC} as an expectation, namely
\[
\mathbb{P}\left (X(T) \geq v\,\big|\;\mathcal{F}_t\right )= \mathbb{E}\left [\Ind{X(T) \geq v}\,\big|\;\mathcal{F}_t\right ]\, ,
\]
then we have the following
\begin{Lemma}
Given the stochastic optimal control problem with terminal probability constraint \eqref{EQN:PC}, then the terminal probability constraints holds if and only if there exists an adapted sub-martingale $\left (P(s)\right )_{s \in [t,T]}$ such that
\[
P(t) = q\, ,\quad P(T) \leq \Ind{X(T) \geq v}\, .
\]
\end{Lemma}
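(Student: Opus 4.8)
The plan is to prove the two implications separately, using a martingale-representation / optional-sampling argument in one direction and a direct construction in the other. The statement characterizes satisfaction of the terminal probability constraint \eqref{EQN:PC} through the existence of a submartingale $(P(s))_{s\in[t,T]}$ pinned at $P(t)=q$ and dominated at the terminal time by $\Ind{X(T)\geq v}$.

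\medskip

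\textbf{Sufficiency.} Suppose such an adapted submartingale $(P(s))_{s\in[t,T]}$ exists. Since $P$ is a submartingale, the conditional expectation cannot decrease over $[t,T]$, so by the submartingale property (optional sampling at the deterministic terminal time $T$) one has
\[
P(t)\leq \media\left[P(T)\,\big|\;\FFF_t\right].
\]
Combining this with the pinning condition $P(t)=q$ and the terminal domination $P(T)\leq \Ind{X(T)\geq v}$ gives
\[
q = P(t) \leq \media\left[P(T)\,\big|\;\FFF_t\right] \leq \media\left[\Ind{X(T)\geq v}\,\big|\;\FFF_t\right] = \Pro{X(T)\geq v\,\big|\;\FFF_t},
\]
which is exactly the constraint \eqref{EQN:PC}. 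This direction is essentially immediate once the submartingale inequality is invoked.

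\medskip

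\textbf{Necessity.} Conversely, assume the constraint $\Pro{X(T)\geq v\,\big|\;\FFF_t}\geq q$ holds. The natural candidate is the martingale obtained by conditioning the terminal indicator on the current information, namely
\[
M(s) := \media\left[\Ind{X(T)\geq v}\,\big|\;\FFF_s\right],\qquad s\in[t,T].
\]
By the tower property $M$ is an $(\FFF_s)$-martingale, it is adapted, and at the terminal time $M(T)=\Ind{X(T)\geq v}$ almost surely. Its initial value is $M(t)=\Pro{X(T)\geq v\,\big|\;\FFF_t}$, which by hypothesis is $\geq q$. To meet the normalization $P(t)=q$ exactly while preserving the submartingale property and the terminal domination, I would set
\[
P(s) := M(s) - \bigl(M(t)-q\bigr),
\]
i.e. shift the martingale down by the nonnegative constant $M(t)-q\geq 0$. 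Then $P(t)=q$, the process $P$ is still a martingale (hence a submartingale), and at terminal time $P(T)=\Ind{X(T)\geq v}-(M(t)-q)\leq \Ind{X(T)\geq v}$, as required. Thus the constructed $P$ satisfies all three conditions.

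\medskip

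The content of the lemma is almost entirely in identifying the right object; the computations are routine once the candidate martingale is chosen. The only point deserving care, and the main obstacle, is measurability and integrability: one must check that $\Ind{X(T)\geq v}$ is $\FFF_T$-measurable and integrable (it is bounded, being an indicator, so integrability is automatic) so that the conditional expectations are well defined, and that the shift by the $\FFF_t$-measurable — indeed constant given $\FFF_t$ — quantity $M(t)-q$ does not destroy adaptedness on $[t,T]$. Since we condition on $\FFF_t$ throughout and $M(t)$ is $\FFF_t$-measurable, this is consistent; the genuinely delicate issue, were one to strengthen the statement, would be whether one can choose $P$ to be a \emph{strict} supermartingale-type bound or to have additional regularity, but for the submartingale characterization as stated the shifted conditional-expectation martingale suffices.
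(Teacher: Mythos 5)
Your proof is correct and follows essentially the same route as the paper: sufficiency via the submartingale inequality combined with the terminal domination $P(T)\leq \Ind{X(T)\geq v}$, and necessity via the conditional-expectation martingale $M(s)=\mathbb{E}\left[\Ind{X(T)\geq v}\,\middle|\,\mathcal{F}_s\right]$ shifted down by its initial excess over $q$. If anything, your version is slightly more careful than the paper's, which writes the shift as a deterministic constant $q_0-q$, whereas your shift by the $\mathcal{F}_t$-measurable quantity $M(t)-q$ is the formulation that stays consistent with the conditional constraint \eqref{EQN:PC}.
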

\begin{proof}
Let us first prove $(\Leftarrow)$: since $P(s)$ is a sub-martingale we have that
\[
\mathbb{E}\left [\Ind{X(T) \geq v}\right ] \geq \mathbb{E}\left [P(T)\,|\mathcal{F}_t\right ] \geq P(t) =q\, . 
\]
To prove the converse implication $(\Rightarrow)$, let us first denote
\[
\begin{split}
q_0 &:= \mathbb{E}\left [\Ind{X^s(T) \geq v}\right ]\, ,\\
P(s) &:=  \mathbb{E}\left [\left .\Ind{X^s(T) \geq v}\right |\mathcal{F}_s\right ] - (q_0-q)\, ,
\end{split}
\]
where  $X^s$ represents the solution with initial time $s \in [t,T]$, then $P$ is an adapted martingale and the claim follows.
\end{proof}
Let us note that when the probability constraints is {\it active}, the sub-martingale $P$ is given by
\[
P(s) = \mathbb{E}\left [\left .\Ind{X(T) \geq v}\right |\mathcal{F}_s\right ]\, ,
\]
hence  $P$ is in fact an adapted martingale, and we  obtain the new state variable
\begin{equation}\label{EQN:PCon}
P(s) = q + \int_t^T \alpha_P(s) \,\mathrm{d}W(s)\, ,
\end{equation}
where $\alpha_P$, taking values in $\RR$, is a new control which, {\it a priori}, cannot be assumed  to be bounded, being derived from the martingale representation theorem.
% we cannot assume a priori $\alpha_P$ to be bounded.
\begin{Remark}
Since $P$ represents the probability required to satisfy a terminal constraint, we could have defined $P$ in equation \eqref{EQN:PCon} as
\[
P(s) = q + \int_t^T P(s)\left (1-P(s)\right )\alpha_P(s)\, \mathrm{d}W(s)\, ,
\]
so that $P$ lies in $[0,1]$.
\end{Remark}

%We are now to explicitly derive the associated HJB equation that must be solved, see  \cite{Bou,Bou2,Son}  for more details as well as 
%derivation of the aforementioned HJB equation and 
%for both existence and uniqueness proofs of associated viscosity solution.\\
Before explicitly derive the HJB equation we are interested in and following \cite{Bou,Bou2,Son}, let us
%The problem we are interested in can be 
further simplify our setting by introducing the set
\[
\begin{split}
D &= \{(t,x,q) \in [0,T]\times \RR^n \times [0,1] \quad : \quad\Ind{X^i(T) \geq v^i}-P^i(T) \geq 0 \quad \mathbb{P}\, \mbox{a.s.} \}\, ,
\end{split}
\]
%Taking then into account the set $D$, 
along with considering the new state variable $P$, see equation \eqref{EQN:PCon}, in such a way that, via the \textit{geometric dynamic programming principle} proved in \cite{Son}, we can define the value function
\begin{equation}\label{EQN:VFunc}
\begin{split}
V(t,x,q) &= \inf \bigg \{ \frac{1}{2} \mathbb{E}_t \left [\sum_{i=1}^n\int_{t}^{T} \alpha^i(s)^2 \,\mathrm{d}s \right ]\quad \mbox{s.t.}\quad \Ind{X^i(T) \geq v^i} -P^i(T) \geq 0 \quad \mathbb{P}\, \mbox{a.s.} \bigg\}\, ,
\end{split}
\end{equation}
where $\mathbb{E}_t$ is the conditional expectation w.r.t. the filtration $\mathcal{F}_t$.\\
Since $V$ is non-decreasing in $q$, we have 
\[
V(t,x,0)\leq V(t,x,q) \leq V(t,x,1)\, , \quad q \in (0,1)\, ,
\]
%It can be immediately deduced that 
therefore $V(t,x,0)$ corresponds to the unconstrained problem and its value function is given by $V(t,x,0) = 0$. As regards to the upper bound, we set $V(t,x,1)= \infty$, and 
%For obvious reasons 
we prolong the value function outside $[0,1]$, setting $V(t,x,q) = 0$, resp. $V(t,x,q)=\infty$, for $q <0$, resp. for $q>1$.\\
Let us then introduce the Hamiltonian that must be satisfied by the unconstrained optimal control
\begin{equation}\label{EQN:CHam}
H^X(x,\alpha,p,Q_x)=(\mu x + \alpha)\cdot \,p+ \frac{1}{2}\sigma^2 \,x^2\, Q_x+ \frac{1}{2}\|\alpha\|^2\,,
\end{equation}
where above we have denoted for short 
\[
\mu x := (\mu^1 x^1,\dots,\mu^n x^n)\,,
\]
and 
\[
\sigma^2 \,x^2 = diag((\sigma^1 x^1)^2,\dots,(\sigma^n x^n)^2)\,,
\]
being $diag$ the $n \times n$ diagonal matrix.\\
Intuitively, we are expecting that, when the terminal constraint is satisfied, one can solve the classical associated HJB equation whose Hamiltonian is given in equation \eqref{EQN:CHam}, deriving that the optimal control coincides with the unconstrained case. Notice that the optimal solution to the present problem is $\alpha = 0$.\\
As regard the constrained case, taking into account the new martingale process $P$, we have to consider the couple
\[
\begin{split}
\mathrm{d}X^i(s) &= \left (\mu^i\, X^i(s)  + \alpha^i(s) \right ) \mathrm{d}s + \sigma^i\, X^i(s)\,\mathrm{d}W^i(s)\, ,\\
\mathrm{d}P^i(s) &= \alpha_P^i(s)\, \mathrm{d}W^i(s)\, ,
\end{split}
\]
so that we can define the constrained Hamiltonian as
\begin{equation}\label{EQN:ConHam}
\begin{split}
&H^{(X,P)}(x,\alpha,p,Q_x,\alpha_P,Q_{xq},Q_q)=\\
&=(\mu x + \alpha) p+ \frac{1}{2}\sigma^2 x^2 Q+ \frac{1}{2}\|\alpha\|^2 + \sigma x Q_{xq}\alpha_P+\frac{1}{2}\alpha^2_P Q_q\,,
\end{split}
\end{equation}
which should play the role of the Hamiltonian of the associated problem when the constraint is binding. Therefore, the HJB associated to the optimal control reads as follow
\begin{equation}
-\partial_t V-\inf_{\alpha \in \mathcal{A}}\inf_{\alpha_P \in \RR} H^{(X,P)}(x,\alpha,\partial_xV,\partial^2_x V,\alpha_P,\partial^2_{xq} V,\partial^2_q V)=0\,,
\end{equation}
where, above and in what follows, for the ease of notation we avoided writing explicitly the dependencies of $V(t,x,q)$.\\
%We stress that in equation \eqref{EQN:ConHam}, $p,\,Q_x,\,Q_{xq},\,Q_q$ should play the role of adjoint variables, that might be retrieved solving a stochastic maximum principle as done in, e.g.,  \cite{CDP}.
%Nonetheless, to the best of our knowledge, there is no such a result for stochastic optimal control with constraints. Therefore, the connection between the HJB equation and the adjoint variables related to the maximum principle remains an open problem. \\
As mentioned above, $\alpha_P$ 
%derives from the martingale representation theorem, we cannot assume a priori that 
could be unbounded, implying that
the associated Hamiltonian may be infinite. 
Since the following holds 
\[
H^{(X,P)}(x,\alpha,p,Q_x,\alpha_P,Q_{xq},Q_q) \geq H^X(x,\alpha,p,Q_x)\,,
\]
to evaluate the minimum of $H^{(X,P)}$ w.r.t. $\alpha_P$, we can exploit a first order optimality condition that
\[
\alpha_P = - \sigma\, x\,\frac{Q_{xq}}{Q_q}\, ,
\]
which, when plugged 
%substituted back 
into equation \eqref{EQN:ConHam}, gives the following minimum for $H^{(X,P)}$ 
%is given by
{\footnotesize
\begin{equation}\label{EQN:HBar}
\begin{split}
&\inf_{\alpha_P \in \RR} H^{(X,P)}= \bar{H}(x,\alpha,p,Q_x,Q_{xq},Q_q)= \\
&=
\begin{cases}
(\mu x + \alpha) p+ \frac{1}{2}\sigma^2 x^2 Q_x+ \frac{1}{2}\|\alpha\|^2 - \frac{1}{2Q_q} \sigma^2x^2Q^2_{xq} & Q_q >0\,,\\
(\mu x + \alpha) p+ \frac{1}{2}\sigma^2 x^2 Q_x+ \frac{1}{2}\|\alpha\|^2 & Q_q = 0 \, ,\\
-\infty & \mbox{otherwise}\, .
\end{cases}
\end{split}
\end{equation}
}
It follows that the associated value function introduced in eq. \eqref{EQN:VFunc} solves the following HJB equation
\begin{equation}\label{EQN:HJB1}
-\partial_t V-\inf_{\alpha \in \mathcal{A}} \bar{H}(x,\alpha,\partial_x V,\partial^2_x V,\partial^2_{xq} V,\partial^2_q V)=0\,,
\end{equation}
subject to the terminal condition
\[
V(T,x,q)= 
\begin{cases}
0 & x \geq v\,,\\
\infty & \mbox{otherwise}\,,\\
\end{cases}
\]
where the Hamiltonian $\bar{H}$ is defined as in equation \eqref{EQN:HBar}.

\subsection{The affine control case}

In order to obtain a closed form solution for the HJB equation \eqref{EQN:HBar} we will  further assume that the admissible controls are of the form
\begin{equation}\label{EQN:AffCase}
\alpha^i(t) = \psi^i(t) X^i(t)\,,
\end{equation}
for a fixed constant $\psi^i \in [0,\Psi]$, $\Psi \in \RR_+ \cup \{\infty\}$. From a financial point of view this implies that the LOLR can decide the interest rate at which the banks assets accrues, allowing the bank to have a higher interest rate  to lower the probability of failure.\\
In what follows we derive the explicit solution for the optimal rate $\psi$  the LOLR has to give to the each bank in order to guarantee its terminal survival probability.\\
The strategy is as follows: given the structure of the optimal control problem, we can analyse each node $i$ separately, where we ansatz the value function to be of the form $V(t,x,q)=\sum_{i=1}^n V^i(t,x^i,q^i)$, where each $V^i$ is regarded as the value function for the optimal problem with respect to the element $i$. Thus, for each player $i$ we compute the solution to the above problem in terms of contour line of a function $\gamma^i(t,x,q)$, defining first the boundaries of the domain for the value function $V^i$, then computing explicitly the contour line on the interior of the domain. We stress that in what follows computations are performed independently for any bank $i$, nonetheless for the sake of brevity we will omit the index $i$.\\
Notice first that given an initial datum and a required survival probability $\bar{q}$, it holds that $\mathbb{P}(X(T) \geq v(T)) \geq \bar{q}$, then the optimal control is given by $\psi \equiv 0$, then the value function  $V(t,x,q) \equiv 0$. Therefore we compute three different domains, obtaining, in closed form, two switching curves splitting such domains. The first region $\Gamma_0$ is the region in which the constraint is not binding, implying that the optimal control is given by $\psi \equiv 0$. Financially speaking, whenever the value of the bank lies within the region $\Gamma_0$,  the bank satisfies the LOLR requirement regarding survival probability meaning that it does not need  further help to increase its liquidity. Recall that, the more the value of the bank increases, the safer is the bank.\\
The second region 
%we are going to compute is the region 
is characterized by the condition $\Gamma_\Psi$. In this region the optimal control exceed the maximum rate $\Psi$ the LOLR is willing to grant, implying that the terminal constraint is not satisfied and the value function $V$ diverges. The last domain, denoted by $\Gamma$, is characterized by a binding  terminal constraint, and here the optimal control $\psi \in (0,\Psi)$ has to be explicitly computed. Similarly, we will denote by $\gamma_0$, resp. $\gamma_\Psi$, the switching region between $\Gamma_0$ and $\Gamma$, resp. between $\Gamma$ and $\Gamma_\Psi$.\\
%We remark that, given the structure of the optimal control we can solve the problem separately for any node $i$. In particular, for the sake of readability, even if we will not specify the index of the node, we will be solving the optimal control for each node independently.\\ 
Regarding $\Gamma$ let us define the {\it highest reachable probability} for node $i$ as
% we first compute the highest reachable probability
% that is
\[
\begin{split}
W^H (t,x) &:= \sup \left \{ q \, : V(t,x,q)<\infty \right \} =\sup_{\psi \in [0,\Psi]} \mathbb{P}\left (X^{t,x;\psi}(T) \geq v(T)\right ) \,,
\end{split}
\]
where $X^{t,x;\psi}(T)$  denotes the value at time $T$ with initial datum $(t,x)$ and control $\psi\in [0,\Psi]$. It follows that the highest reachable probability is attained when considering the maximum admissible control $\Psi<\infty$, so that by It\^{o} formula and the \textit{Feynman--Kac theorem}, we have that $W^H (t,x)$ solves the parabolic PDE
\[
\begin{cases}
W^H (t,x)(T,x) &= \Ind{[v(T),\infty)}(x)\,,\\
-\partial_t W^H (t,x) &= \partial_x W^H (t,x) \left (\mu + \Psi\right )x + \frac{1}{2} \sigma^2 x^2 \partial_x^2 W^H (t,x)\,,
\end{cases}
\]
whose solution can be explicitly computed to be

\begin{equation}\label{EQN:Wpsiq}
\begin{split}
&W^H (t,x) = \mathbb{P}\left (\log X^{t,x;\Psi}(T) \geq \log v(T) \right ) = \\
&=\mathbb{P}\left (W(T-t) \geq \frac{1}{\sigma}\left (\log \frac{v(T)}{x} - \left (\mu + \Psi-\frac{\sigma^2}{2}\right )(T-t)\right )\right ) =\\
&= \frac{1}{2} \left (1- \text{Erf} \left ( \frac{\log \frac{v(T)}{x} - \left (\mu + \Psi-\frac{\sigma^2}{2}\right )(T-t)}{\sqrt{2\sigma^2(T-t)}}\right )\right ) =\\
&= \frac{1}{2} \left (1- \text{Erf}\left (d(\mu,\Psi,\sigma,T-t\right )\right )\,,
\end{split}
\end{equation}

with
\[
d(\mu,\Psi,\sigma,T-t) :=  \frac{\log \frac{v(T)}{x} - \left (\mu + \Psi-\frac{\sigma^2}{2}\right )(T-t)}{\sqrt{2 \sigma^2(T-t)}}\,,
\]
and $Erf$ denotes the \textit{error function}.
For $W^H(t,x) = \bar{q} \in (0,1)$, we have that
\[
\frac{1}{2} \left (1- \text{Erf}\left (d(\mu,\Psi,\sigma,T-t\right )\right ) = \bar{q}\,,
\]
and solving for $\Psi$, we obtain the boundary region in implicit form
\begin{equation}\label{EQN:Psitx1}
\Psi=\gamma_\Psi(t,x;\bar{q}) = \left (\frac{\sigma^2}{2} - \mu \right ) + \frac{\log \frac{v(T)}{x}}{T-t} - \frac{\sigma \rho}{\sqrt{T-t}}\,, 
\end{equation}
with
\[
\rho := \sqrt{2}\,\text{Erf}^{-1}\left (1-2\bar{q}\right )\,.
\]
Thus, for a required probability of success $\bar{q}$, the control problem is not feasible in
\[
\Gamma_\Psi = \left \{ (t,x) \, : \left (\frac{\sigma^2}{2} - \mu \right ) + \frac{\log \frac{v(T)}{x}}{T-t} - \frac{\sigma \rho}{\sqrt{T-t}} > \Psi \right \} \, 
\]  
so that, for starting data $(t,x)$ within the left hand side of $\gamma_\Psi(t,x;\bar{q})$, see eq. \eqref{EQN:Psitx1}, the terminal constraint cannot be satisfied, see Figure \ref{FIG:Region}. If $\Psi = \infty$, that is the LOLR is willing to give a possibly infinite return rate, any point is controllable, hence we can always find an admissible control such that the terminal probability constraint is attained.\\
As regard $\Gamma_0$, computing the no-action region we have
\[
\begin{split}
W^0(t,x) &= \mathbb{P}\left (X^{t,x;\psi_0}(T) \geq v(T)\right ) =\\
&= \frac{1}{2} \left (1- \text{Erf}\left (d(\mu,\psi_0,\sigma,T-t\right )\right )\,,
\end{split}
\]
then
%, solving explicitly such boundary, and 
by assuming that $W^0(t,x) = \bar{q} \in (0,1)$, we have 
\[
\frac{1}{2} \left (1- \text{Erf}\left (d(\mu,\psi_0,\sigma,T-t\right )\right ) = \bar{q}\,,
\]
and, solving for $\psi_0$, we obtain the boundary region
\begin{equation}\label{EQN:Psitx}
0=\gamma_0(t,x;\bar{q})= \left (\frac{\sigma^2}{2} - \mu \right ) + \frac{\log \frac{v(T)}{x}}{T-t} - \frac{\sigma \rho}{\sqrt{T-t}}\,, 
\end{equation}
where
\[
\rho := \sqrt{2}\,\text{Erf}^{-1}\left (1-2\bar{q}\right )\,,
\]
and, as done before, we are left with the following
%Therefore as above we have obtained the 
no-action region
\[
\Gamma_0 = \left \{ (t,x) \, : \left (\frac{\sigma^2}{2} - \mu \right ) + \frac{\log \frac{v(T)}{x}}{T-t} - \frac{\sigma \rho}{\sqrt{T-t}} < 0 \right \} \, 
\]
so that, given a starting value $(t,x) \in \Gamma_0$, the terminal constraint is satisfied and the optimal return is given by the null control $\psi \equiv 0$.\\
At last the action region $\Gamma$ is the one delimited by $\Gamma_0$ and $\Gamma_\Psi$, that is
\[
\Gamma = \left \{ (t,x) \, : 0 < \left (\frac{\sigma^2}{2} - \mu \right ) + \frac{\log \frac{v(T)}{x}}{T-t} - \frac{\sigma \rho}{\sqrt{T-t}} < \Psi \right \} \, .
\]
Thus, being $(t,x) \in \Gamma$, the controller has to find the optimal control so that the terminal probability constraint holds. 
%Let us therefore compute the 
By computing the reachability set with  fixed constant control $\bar{\psi}$, that is
\[
\begin{split}
W^{\bar{\psi}}(t,x) &= \mathbb{P}\left (X^{t,x;\bar{\psi}}(T) \geq v(T)\right ) = \mathbb{E}\left [\Ind{[v(T),\infty)}\left (X^{t,x;\bar{\psi}}(T)\right ) \right ]\,,
\end{split}
\]
proceeding as above, we obtain
%{\footnotesize
\begin{equation}\label{EQN:Wpsiq}
\begin{split}
&W^{\bar{\psi}}(t,x) = \mathbb{P}\left (\log X^{t,x;\bar{\psi}}(T) \geq \log v(T) \right ) = \frac{1}{2} \left (1- \text{Erf}\left (d(\mu,\bar{\psi},\sigma,T-t\right )\right )\,,
\end{split}
\end{equation}
%}
which implies
\begin{equation}\label{EQN:BarPsi}
\bar{\psi}=\gamma_{\bar{\psi}}(t,x;\bar{q})= \left (\frac{\sigma^2}{2} - \mu \right ) + \frac{\log \frac{v(T)}{x}}{T-t} - \frac{\sigma \rho}{\sqrt{T-t}}\;. 
\end{equation}
What we have obtained so far
%Above results 
has to be intended as follows: if the autonomous process $X^{t,x;0}(T)$ already satisfies the terminal probability constraint, then it is optimal to solve the control problem with no terminal constraint, whose solution is given by the null control in the present case. \\
Therefore, for a fixed $q \in (0,1)$, if $(t,x) \in \Gamma$,  the optimal control $\psi$ is given by
\begin{equation}\label{EQN:Psitx}
\gamma_\psi(t,x;q) = \left (\frac{\sigma^2}{2} - \mu \right ) + \frac{\log \frac{v(T)}{x}}{T-t} - \frac{\sigma \rho}{\sqrt{T-t}}=\psi\,,
\end{equation}
see Figure \ref{FIG:Region} for a representation of the above obtained regions.
\begin{figure}
\centering
\includegraphics[width=.7\textwidth]{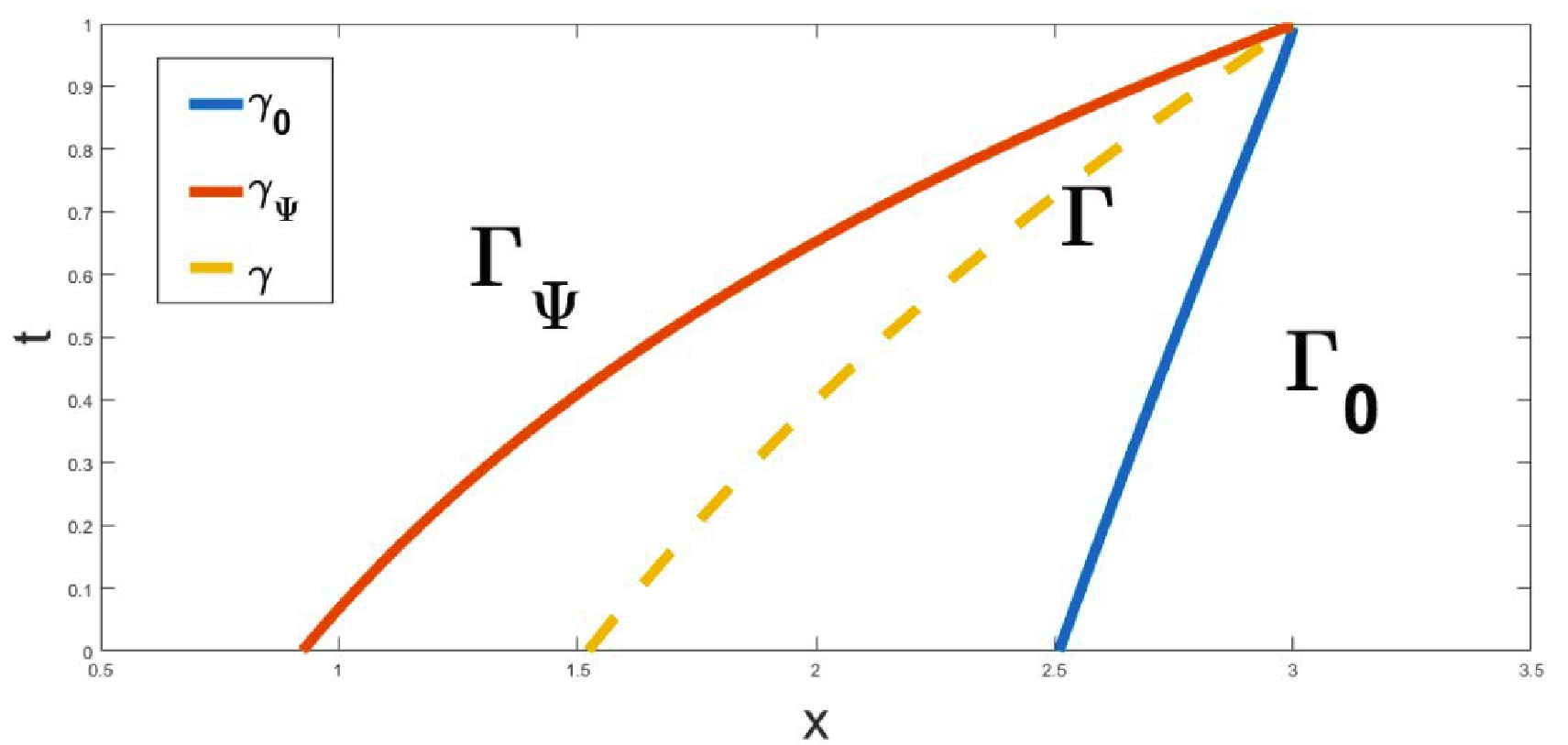}
\caption{Representation of different domains for the optimal control problem.}\label{FIG:Region}
\end{figure}
Moreover, along the curve $W^{\bar{\psi}}(t,x)$, the terminal probability of success remains constant, so that the optimal control is given by the constant control $\bar{\psi}$.\\
Being the optimal control for node $i$ constant along $W^{\bar{\psi}}(t,x)$, then, exploiting equation \eqref{EQN:VFunc}, the value function for the above control problem reads as follow
%can be computed explicitly using \eqref{EQN:VFunc}
{\footnotesize
\begin{equation}\label{EQN:SolV}
V^i\left (t,x^i,W^{\bar{\psi}^i}(t,x^i)\right ) = (\bar{\psi}^i)^2 (x^i)^2 \left (\frac{e^{(2(\mu^i + \bar{\psi}^i) + (\sigma^i)^2) (t-T)} -1}{2(\mu^i + \bar{\psi}^i) + (\sigma^i)^2 }\right ) \,,
\end{equation}
}
so that we have the following.
\begin{Theorem}
The value function for the optimal control problem \eqref{EQN:CrirMin} is given by
\begin{equation}\label{EQN:SolVFin}
V(t,x,W^{\bar{\psi}}(t,x))=\sum_{i=1}^n V^i(t,x^i,W^{\bar{\psi}^i}(t,x^i))\,,
\end{equation}
where
\begin{description}
\item[(i)] if $(t,x^i) \in \Gamma^i$ and $q^i \in (0,1)$ are such that $\gamma_{\bar{\psi}^i}(t,x^i,q) = \bar{\psi}^i$, then $\bar{V}(t,x^i,W^{\bar{\psi}^i}(t,x^i))$ is given as in equation \eqref{EQN:SolV}.
\item[(ii)] if $(t,x^i) \in \Gamma_0^i$ and $q^i \in (0,1)$, then it holds $V^i(t,x^i,q^i)=0$.
\end{description}
Then $V$ as defined above in equation \eqref{EQN:SolVFin} defines a classical solution to the HJB equation \eqref{EQN:HJB1} on $\Gamma \cap \Gamma_0$.\\
Moreover, the optimal control within the class of affine controls is given as in equation \eqref{EQN:AffCase}, where $\psi$ is given as in equation \eqref{EQN:BarPsi}
\end{Theorem}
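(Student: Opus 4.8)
The plan is to exploit the additive structure of both the cost functional \eqref{EQN:CrirMin} and the decoupled dynamics \eqref{EQN:BankPC}: since the running cost splits as $\frac{1}{2}\sum_i (\alpha^i)^2$, the states $X^i$ are driven by independent Brownian motions, and the terminal constraints \eqref{EQN:PC} are imposed node by node, the control problem separates into $n$ scalar problems. This justifies the ansatz $V(t,x,q)=\sum_{i=1}^n V^i(t,x^i,q^i)$, so that it suffices to verify the claimed formula for a single node, dropping the index $i$. I would first dispose of the two easy regimes. On $\Gamma_0$, by the very definition of the no-action region the uncontrolled process $X^{t,x;0}$ already meets the constraint $\mathbb{P}(X(T)\ge v)\ge q$, so the null control $\psi\equiv 0$ is feasible; since it incurs zero cost and the cost is nonnegative, it is optimal and $V^i(t,x,q)=0$, which is part (ii). On $\Gamma_\Psi$ no admissible affine control can meet the constraint, consistent with $V=\infty$ there.

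For the action region $\Gamma$, I would restrict to constant affine controls $\alpha=\bar\psi x$, so that $X$ is a geometric Brownian motion with drift $\mu+\bar\psi$ and volatility $\sigma$. The monotonicity of the reachable probability in $\bar\psi$, already explicit in the form \eqref{EQN:Wpsiq}, shows that the terminal constraint binds exactly when $W^{\bar\psi}(t,x)=q$, which by inversion yields the unique feasible rate $\bar\psi=\gamma_{\bar\psi}(t,x;q)$ of \eqref{EQN:BarPsi}. Because the running cost $\frac{1}{2}\bar\psi^2\,\mathbb{E}_t[\int_t^T X(s)^2\,ds]$ is strictly increasing in $\bar\psi\ge 0$, the minimal feasible rate is the optimal one within the constant affine class, establishing \eqref{EQN:AffCase}. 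The value \eqref{EQN:SolV} then follows from the closed form of the second moment of a geometric Brownian motion, $\mathbb{E}_t[X(s)^2]=x^2 e^{(2(\mu+\bar\psi)+\sigma^2)(s-t)}$, and a single time integration over $[t,T]$.

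It remains to check that the function so assembled is a genuine classical solution of \eqref{EQN:HJB1}. I would treat $(t,x,q)\mapsto\bar\psi(t,x;q)$ as defined implicitly by $W^{\bar\psi}(t,x)=q$ and, using the explicit expressions \eqref{EQN:BarPsi} and \eqref{EQN:SolV}, compute the derivatives $\partial_t V,\partial_x V,\partial_x^2 V,\partial_{xq}^2 V,\partial_q^2 V$ by implicit differentiation. Substituting these into the reduced Hamiltonian $\bar H$ of \eqref{EQN:HBar}, together with the minimizing $\alpha_P=-\sigma x\,\partial_{xq}^2 V/\partial_q^2 V$ and the $\alpha$ obtained from the first-order condition $\partial_\alpha\bar H=0$, one verifies the PDE pointwise on the interior of $\Gamma$; on $\Gamma_0$ the equation collapses to $\partial_t V=0$ with $\alpha=0$, trivially satisfied by $V\equiv 0$. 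The terminal condition is matched by construction, since as $t\uparrow T$ feasibility forces $x\ge v$.

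The main obstacle I anticipate is precisely this last verification: confirming that the control $\bar\psi x$, optimal within the constant affine class, is also stationary for the full reduced Hamiltonian $\bar H$, i.e. that the first-order relation $-\partial_x V=\bar\psi x$ holds with $\bar\psi$ the implicitly defined optimal rate, and that $V$ has enough regularity across the free boundaries $\gamma_0$ and $\gamma_\Psi$ to qualify as a classical ($C^{1,2}$) solution. In particular the term $Q_q=\partial_q^2 V$ must remain strictly positive so that the first branch of \eqref{EQN:HBar} is the active one, and the matching of $V$ together with its first derivatives across $\gamma_0$, where $V$ passes continuously from $0$ to the formula \eqref{EQN:SolV}, has to be checked rather than assumed; this smooth-fit requirement is where most of the genuine work lies.
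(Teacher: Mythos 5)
Your proposal takes essentially the same route as the paper's proof: the same node-by-node decomposition justified by the additive cost and decoupled dynamics, the same treatment of $\Gamma_0$ via feasibility of the null control, the same identification of the optimal constant rate $\bar{\psi}$ as the minimal feasible one (your cost-monotonicity in $\bar{\psi}$ is the paper's monotonicity of $V$ in $q$ in disguise), and the same second-moment computation of the cost along the contour $W^{\bar{\psi}}(t,x)=q$ to obtain \eqref{EQN:SolV}. The smooth-fit verification you flag as the main remaining obstacle is precisely the step the paper settles (rather briefly) by observing that $\bar{\psi}\to 0$ as $x$ approaches the switching curve $\gamma_0$ from inside $\Gamma$, so that $\partial_x V$ and $\partial_x^2 V$ computed from \eqref{EQN:SolV} tend to $0$ and match the identically vanishing derivatives of $V\equiv 0$ on $\Gamma_0$, yielding differentiability on $\Gamma\cup\Gamma_0$.
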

\begin{proof}
The structure of the optimal control problem gives that the contribute of each node can be treated separately, so that the value function is of the form \eqref{EQN:SolVFin}, where each $V^i$ can be regarded as the value function for the optimal control for the node $i$ alone. As above, for ease of notation, we will omit the index i.\\
Fixing the node $i$, it can be trivially shown that for $(t,x) \in \Gamma_0$ we have $V(t,x,q)=0$.\\
Let $(t,x) \in \Gamma$, thus along $W^{\bar{\psi}}(t,x)$, the terminal probability of surviving is fixed, so that explicit computation shows that $V$ as defined in equation \eqref{EQN:SolV} solves the HJB equation \eqref{EQN:HJB1}. Observing that the map
\[
q \mapsto V(t,x,q)\,,
\]
is non--decreasing, together with the fact that $W^{\bar{\psi}}(t,x) > W^{\psi}(t,x) $ for $\bar{\psi} > \psi$, we have that 
\[
V\left (t,x,W^{\psi}(t,x) \right ) = -\infty\,, \quad \psi < \bar{\psi}\,,
\]
because the terminal constraint in equation \eqref{EQN:VFunc} is not satisfied. Analogously, if $\psi > \bar{\psi}$, then 
$W^{\bar{\psi}}(t,x) < W^{\psi}(t,x)$. Therefore, as before, the non-decreasing property of $V$ w.r.t. the third argument $q$, implies 
\[
V\left (t,x,W^{\psi}(t,x) \right ) > V\left (t,x,W^{\bar{\psi}}(t,x) \right )\,,
\]
and the minimum is attained for the control $\bar{\psi}$ implicitly given by equation \eqref{EQN:Psitx}.\\
As regard the value function regularity, notice that it is a classical solution in both region $\Gamma$ and $\Gamma_0$. In order to prove that it is a global classical solution we need to prove that it is regular on $\gamma_0$. Let $\bar{x}$ the value on the switching curve $\gamma_0$, that is for fixed $(t,q)$, we have that $\gamma_0(t,\bar{x},q)=0$; then since $\bar{\psi} \to 0$ as $x \to \bar{x}^-$ we have that $\lim_{x \to \bar{x}^-} \partial_x^2 V = 0 = \lim_{x \to \bar{x}^+} \partial_x^2 V$ and $\lim_{x \to \bar{x}^-} \partial_x V = 0 = \lim_{x \to \bar{x}^+} \partial_x V$, hence the value function is differentiable on $\Gamma \cup \Gamma_0$.
\end{proof}
\section{Application to a network of financial banks}\label{Appl}
In the present section we use previously obtained results to study a real-world application
characterized by an interconnected network of banks. In particular, we will show how optimal solutions previously computed can modify the evolution of sucha a network.
We stress that, for the sake of readability, we will apply our results to a small network, even if, due the fact that the optimal solution is computed in closed form, our results can be easily extended to arbitrary {\it big} systems.\\
\subsection{PageRank}\label{SEC:PR}
Before introducing the model, let us introduce an explicit method to address relative importance of a single node in a network. In particular, such an approach will be then used to systematically decide the survival probability for each node.\\
Let us note that, along  previous sections, we have stated an optimal control problem which has been then solved  deriving its solution under the assumption that the \textit{accepted probability of failure} $q^i$ is a fixed parameter to be chosen endogenously. In what follows we propose a general, automatic, criterion to deduce the global importance of each node in the system. Next computations exploit results on network analysis already  used, e.g., to set the functioning logic of the \textit{Google} research engine, see, e.g., \cite{Page}. According to the network formulation introduced in Section \ref{SEC:GS}, and using results derived in \cite{Page}, we show how to score the relative importance of any bank in the network, computing its so called \textit{Page Rank}, allowing us to choose 
% then this score will be applied in order to 
the best survival probability $q$.\\
According to the framework described in Section \ref{SEC:GS}, see also Appendix \ref{G:framework},
let us consider a system of interconnected $n$ banks and related
% in accordance to the framework described in Section \ref{SEC:GS} and Appendix \ref{G:framework}. 
%Consider the 
standard  {\it bank enumeration}. Namely, we take into account the usual {\it one-to-one} correspondence relation between the set of banks and the set of vertexes $V:=\{v_1,v_2,\dots,v_n\}$,  referred to as nodes, while  $I:=\{1,2,\dots,n\}$ is the associated set of indexes. Moreover, consider a LOLR strategy in which for each $v_i\in V$ the default probability constraint parameter $q^i$ depends on a predetermined rank $R^i$ associated to the $i^\text{th}$ bank, hence representing its systemic importance in the network.\\
In what follows we are considering graphs as  defined in Section \ref{SEC:GS}. In particular, to each node $v_i\in V$ corresponds a bank, while to edges connecting nodes $(v_i,v_j)\in V\times V$, we associate the following quantities 
\begin{equation}\label{gamma}
\gamma_{(i,j)}^+=\frac{c^+\,L_{i,j}+c^-\,L_{j,i}}{N_j-\min(N)+1},\quad\gamma_{(i,j)}^-=\frac{c^+\,L_{j,i}+c^-\,L_{i,j}}{N_i-\min(N)+1},
\end{equation}
where, letting
\begin{align}
L_j^+=\sum_{i\sim j}L_{ij},\quad L_j^-=\sum_{i\sim j}L_{ji},\\
i\sim j \iff v_i,v_j \text{ are connected},\nonumber
\end{align}
we define $N_j$ as the net amount of money held by bank $j$ if it would pay its debts at the actual time, i.e. $N_j:=X_j+L_j^+-L_j^-$.
Moreover $c^+$ and $c^-$ are two non-negative constants chosen to confer more importance to due debts, resp. to owed credits. For the sake of simplicity, since $c^+$ and $c^-$ are meant to be weight parameters, we set $c^+ + c^-=1$. Notice that $\gamma_{(i,i)}^+=\gamma_{(i,i)}^-=0$ and $\gamma_{(i,j)}^-=\gamma_{(j,i)}^+$, for all $i,j\in I$.\\
Let us  introduce the notion of {\it outdegree} $\text{deg}_\gamma^+$, resp. {\it indegree} $\text{deg}_\gamma^-$, for any vertex $v_i\in V$, namely
\[
\text{deg}_\gamma^+(v_i)=\sum_{j\in\mathcal{I}}\gamma_{(i,j)}^+,\quad\text{deg}_\gamma^-(v_i)=\sum_{j\in\mathcal{I}}\gamma_{(i,j)}^-\,,
\]
and normalize the quantities defined in \eqref{gamma} associated to any couple $(i,j)$ of edges in the graph
\[
\vec{\tau}_{(i,j)}=\frac{\gamma_{(i,j)}^+}{\text{deg}_\gamma^+(v_j)},\quad \cev{\tau}_{(i,j)}=\frac{\gamma_{(i,j)}^-}{\text{deg}_\gamma^-(v_j)}
\]
corresponding to the ratio of a linear combination on the liabilities between bank $i$ and bank $j$, and the asset value of bank $j$. Moreover, we define the matrix $\vec{\mathcal{T}}$ as the matrix whose entries are $\vec{\tau}_{(i,j)}$, for $i,j\in\mathcal{I}$, the quantities $\vec{\tau}_{(i,j)}$ being  the weights assigned to each oriented edge. \\
Therefore, the rating value associated to any node/bank $v_i$ is given by the following recursive formula 
\begin{equation}\label{rating}
R^i_d=d\,\sum_{j\sim i}\vec{\tau}_{(i,j)}\,R^j_d,
\end{equation}
where $d\in(0,1)$ is a parameter to be chosen,  typically  $d=0.85$, see, e.g., \cite{Mug}. To compute equation \eqref{rating}, we  introduce the so called \emph{Google-matrix}, see, e.g., \cite[ch 2]{Mug}.\\
We assume that our network is composed by banks not solely owing liabilities if $c^+=1$, resp. not solely owning liabilities if $c^+=0$, and at least connected for $c^+\in(0,1)$.
Of course,  banks that are non connected  to others belonging to the network, are simply not ranked, since their default cannot affect the system. On the other hand, even if the conditions for $c^+\in\{0,1\}$ are not required, they guarantee the boundedness of all the elements of the matrix defined in the next Definition \ref{D:Gm}. We stress that,  to avoid above restrictions, one can modify the values  assigned  to edges by equation \eqref{gamma}, e.g., as follows: for $c^+=1$ and for every $i\sim j$, define  $\widetilde{\gamma}_{(i,j)}^+=L_{i,j}/(N_j-\min(N)+1)+\epsilon$  as the modified value assigned to the edges.
\begin{Definition}[Google-matrix]\label{D:Gm}
Let $J$ be a $n\times n$-matrix whose entries are all ones. A \textit{Google-matrix} is a $n\times n$-matrix given by
\begin{equation}\label{Gm}
\mathcal{G}_d:=\frac{1-d}{n}\,J+d\,\vec{\mathcal{T}},
\end{equation}
where $d\in(0,1)$ can be chosen  to guarantee irreducibility of $\mathcal{G}_d$, while $J$ is the $n \times n$ matrix whose all entry are $1$. 
\end{Definition}
Since the matrix defined in equation \eqref{Gm} is positive we can apply the Perron–Frobenius Theorem which assures us that there exists a {\it maximum} real eigenvalue $\lambda>0$ of $\mathcal{G}_d$,
% greater than any other eigenvalue, 
indeed $\lambda$ is the so-called dominant Perron–Frobenius eigenvalue. Moreover, there exists one of the associated eigenvectors, denoted by $R_d$ and usually called \emph{Perron-Frobenius dominant vector}, which is both strictly positive and normalized and whose components represent the rating of each bank. Let us recall that $d$ is usually chosen to be approximately equals to 0.85, see, e.g., \cite{Mug}.\\
It follows that proposed ranking procedure  consists in computing the following series
\[
R_d=d\,\sum_{k=0}^\infty (1-d)^k\,(\mathcal{G}_d)^k\,\mathbf{1}\,,
\]
where we denoted by $\mathbf{1}$ a $n$-dimensional vector whose entries are all equal to one.

\subsection{A concrete case study}\label{S:PRexample}
In what follows we  consider a systems of banks aiming at computing their ranking. First of all, we are considering a LOLR willing to save banks whose failure would cause insolvency and no ability to pay back their liabilities, i.e. $c^+=0$, implying $c^-=1$. According to what we have  seen along previous sections, see also \cite{Mug}, we fix $d=0.85$ and  we consider a  system of banks whose liability matrix and cash vector are as follows, see Figure \ref{fig} for the associated graph:
\[
%\begin{split}
\mathcal{L} =
\begin{bmatrix}
0  &0  &10&0  \\
5  &0  &5  &5  \\
0  &0  &0  &0  \\
10&4 &0  &0
\end{bmatrix},\quad 
X=
\begin{bmatrix}
5.2\\6\\13\\3
\end{bmatrix}\;.
\]
Let us note that proxies for $\mathcal{L}$ and $X$ can be evaluated through the methodology proposed in \cite{DocWang} starting by synthetic data generated by FX markets settlements. However, the aim of this paper is to focus on the LOLR strategy which should have complete information on the financial market,  therefore we do not go into technical details on the estimation procedure.\\
As explained in Definition \ref{D:Gm},  the associated Google-matrix can be computed as follows
\[
%\implies\\
\mathcal{G}_d=
\begin{bmatrix}
0.0375&0.8344&0.0375&2.3042\\
0.0375&0.0375&0.0375&0.9442\\
2.9352&0.8344&0.0375&0.0375\\
0.0375&0.8344&0.0375&0.0375
\end{bmatrix},
%\end{split}
\]
where the eigenvalues of the matrix $\mathcal{G}_d$ are $\lambda_1=1.2892$, $\lambda_2=-0.8449$, $\lambda_3=-0.1472 + 0.3982i$ and $\lambda_4=\overline{\lambda_3}$. The absolute value of the eigenvector corresponding to the highest eigenvalue is
\[
R=v_1=\begin{bmatrix}0.3516&0.1342&0.9177&0.1275\end{bmatrix}^T\;.
\]
The third bank is the one with the highest ranking. Indeed, it is easy to note that its default would cause the default of the first bank and then an insolvency cascade. This is due to the fact that the third bank is systematically more important than the others. Notice that the amount of money due is the most important aspect to be taken into account for the safety of the system. We have reported in  Table \ref{tab1}  further considerations.
\begin{figure}
\centering
\includegraphics[width=.4\textwidth]{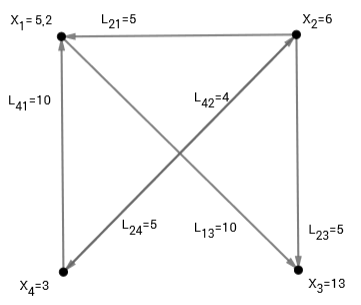}
\caption{Graph representing the system of banks: nodes report the cash value of each bank, while the oriented edges represent the amount of money lend from a bank to another.}\label{fig}
\end{figure}

\begin{table}
\begin{center}
\begin{tabular}{c | c c c c}
Banks  ($i$)&1&2&3&4\\
\hline
$X_i$& 5.2&6&13&3\\
$\sum_{j\sim i}L_{ji}$ &15&4&15&5\\
$R_i$&0.3516&0.1342&0.9177&0.1275
\end{tabular}\caption{ Comparison among the banks rankings.}\label{tab1}
\end{center}
\end{table}
\begin{Remark}
Looking at  Figure \ref{fig} and Table \ref{tab1}, we can see that although the first and the third bank are owning the same amount of money to other banks, nonetheless their rankings $R$ are significantly different. This is due to the fact that Bank 3 owns to Bank 1 and its insolvency would probably cause the default of Bank 1. In this example the cascade effect caused by the default of Bank 3 would stop with the default of two banks because of the small dimension of the system, while, on the contrary, such an effect amplifies in big networks.
\end{Remark}
%%%%%%%%%%%%%%%%%%%%%%%%%%%%%%%%%%%
%%%%%%%%%%%%%%%%%%%%%%%%%%%%%%%%%%%
%%%%%%%%%%%%%%%%%%%%%%%%%%%%%%%%%%%
%%%%%%%%%%%%%%%%%%%%%%%%%%%%%%%%%%%
%%%%%%%%%%%%%%%%%%%%%%%%%%%%%%%%%%%
%%%%%%%%%%%%%%%%%%%%%%%%%%%%%%%%%%%
%%%%%%%%%%%%%%%%%%%%%%%%%%%%%%%%%%%
%%%%%%%%%%%%%%%%%%%%%%%%%%%%%%%%%%%
%%%%%%%%%%%%%%%%%%%%%%%%%%%%%%%%%%%
%%%%%%%%%%%%%%%%%%%%%%%%%%%%%%%%%%%
%%%%%%%%%%%%%%%%%%%%%%%%%%%%%%%%%%%
%%%%%%%%%%%%%%%%%%%%%%%%%%%%%%%%%%%
\subsection{LOLR strategy under the PageRank approach}
\begin{figure}
\centering
\includegraphics[width=.4\textwidth]{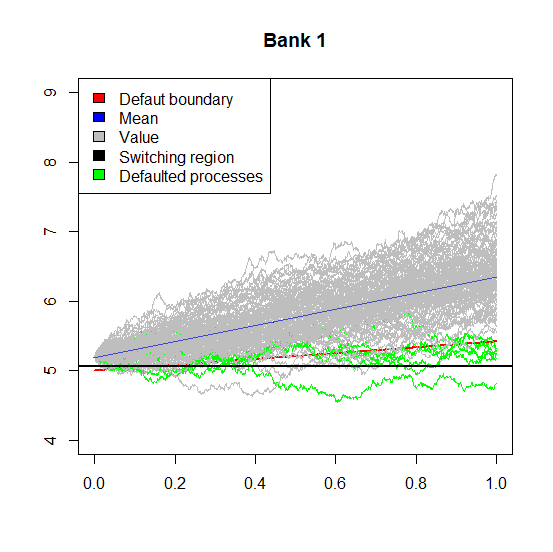}\\
\includegraphics[width=.4\textwidth]{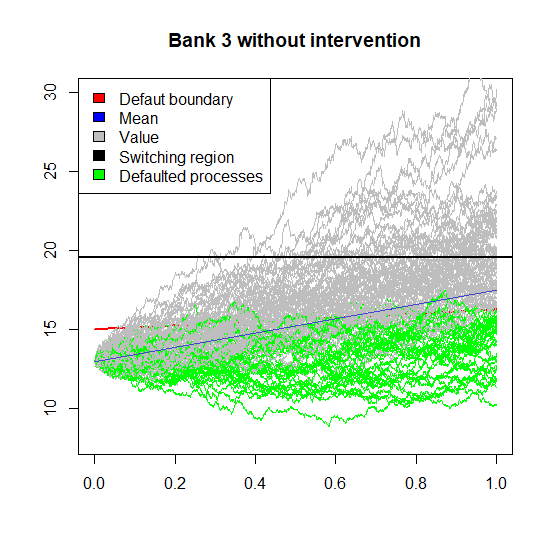}
\includegraphics[width=.4\textwidth]{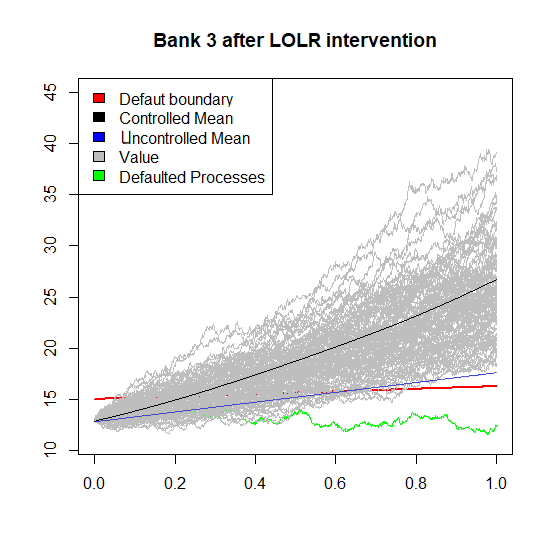}
\caption{100 simulations for the evolution of Bank 1 (top panel), 100 simulations for the evolution of Bank 3, without LOLR intervention (middle panel) and 100 simulations for the evolution of Bank 3, with LOLR intervention (bottom panel).}\label{figBC}
\end{figure}
Iin what follows
% We will see in the present subsection, 
we shall describe how to adapt the LOLR problem stated in Section \ref{SEC:PC} to guarantee more flexibility to those banks that are more important for the network's health. Such type of strategies are often referred to as \textit{Systemic importance driven} (SID) strategies, see appendix \ref{C:cap} for more details.\\
We recall that the aim of the LOLR is to minimize the expenditure on banks bailout given by equation \eqref{EQN:CostFunc} constrained by \eqref{EQN:ProbC}, i.e. guaranteeing a probability $q^i$ that the bank $i$ will not default. Let us fix an identical probability constraint $q\in[0,1)$ for all the banks, hence adopting an equality policy analogous to the \textit{max liquidity} (ML) strategy introduced in \cite{Cap}, see also appendix \ref{C:cap}. We note that a ML strategy guarantees no privileges to any banks, which  would lead the LOLR to lend the same amount of money for systematically important banks as for those banks whose failure would not cause a {\it cascading effect}.\\
The main idea of the subsequent analysis consists in defining the probability constraints as an increasing function of the rank assigned to each bank. Namely, we have
\[
q^i=f(R^i),\quad\text{for $f:\mathbb{R}^+\rightarrow[0,1)$ increasing function}\,,
\]
where, as seen in Section \ref{SEC:PR}, $R^i$ is the ranking of the bank $i$. Notice that requiring $f'=0$ the LOLR will again be restricted to the ML strategy.\\
In \cite{Cap} we have shown  that choosing $f$ to be an increasing function leads to a more convenient scenario for the health of networks which have a core-periphery structure, whereas, normally, banks networks have a dense cohesive core, with a periphery less connected.\\
Coming back to the type of network already defined in subsection \ref{S:PRexample}, see Figure \ref{fig}, we assume that the LOLR assigns the following probability constraints
\begin{equation}\label{qR}
q^i=0.9+0.05\cdot\mathbbm{1}_{\{R^i>0.5\}}+0.04\cdot\mathbbm{1}_{\{R^i>0.75\}}\,;
\end{equation}
and we perform a one period simulation of the network, see Figure \ref{fig}, taking $t_0=0$ and $T=1$. Let us assume, for the sake of simplicity,  that all the liabilities expire at time $T$, and that they exponentially increase in time with fixed growth rate $r=0.08$, i.e.
\[L(t)=L\,e^{r\,t},\quad \text{for }t\in[0,1]\,.
\]
Furthermore, we assume that cash vectors' dynamic evolve according to geometric Brownian motions evolving, namely:
% according to the following equations
\begin{align*}
\mathrm{d}X_t^1&=X_t^1\,(0.2\,\mathrm{d}t+0.1\,\mathrm{d}W_t),\\
\mathrm{d}X_t^2&=X_t^2\,(0.15\,\mathrm{d}t+0.25\,\mathrm{d}W_t),\\
\mathrm{d}X_t^3&=X_t^3\,(0.3\,\mathrm{d}t+0.2\,\mathrm{d}W_t),\\
\mathrm{d}X_t^4&=X_t^4\,(0.05\,\mathrm{d}t+0.4\,\mathrm{d}W_t)\;.
\end{align*}
Then, accordingly to equation \eqref{EQN:Psitx}, we have that the banks' log-switching regions $y^i$, $i=1,\ldots,4$, read as follow 
\begin{align*}
y^1=1.622593, \quad\quad y^2=0,\quad y^3=2.97332,\quad\quad y^4=0,\;\\
q^1=0.9,\quad\quad\quad q^2=0.9,\quad\;\; q^3=0.99,\quad\quad q^4=0.9.
\end{align*}
recalling that they have to be less than the log initial wealth $X^i(0)$  in order to guarantee the fulfilment of the probability constraint. Therefore, since 
\[
\begin{split}
\log(X^1(0))&=1.6487,\quad\log(X^2(0))=1.7918\,\\
\log(X^3(0))&=2.5649,\quad\log(X^4(0))=1.0986\,,
\end{split}
\]
we have that the LOLR has to intervene controlling Bank 3. Notice that the LOLR has not to intervene in banks 2 and 4, since they have more credits than debits, hence they cannot face bankruptcy, while the opposite is true for banks 1 and 3. For $q^1=0.95$, we would have $\widetilde{y}^1=1.6589$ and there would need a LOLR intervention injecting money also in bank 1.\\
Figure \ref{figBC} (top panel) represents 100 simulation for the evolution of Banks 1 and 3, with and without LOLR intervention. Since the probability of Bank 1 to survive is greater than $q^1=0.9$, the LOLR is not going to intervene, whereas indeed its probability to default is approximately $0.062$. \\
Clearly, requiring $q^1 =0.95$ would imply that the LOLR has to intervene lending money to Bank 1. In the middle Figure \ref{figBC}, there are represented 100 simulations of the process associated to Bank 3; since $q^3=99\%$ and the default probability of Bank 3 is $0.388$, the LOLR is going to intervene injecting capital in its cash reserve. After the optimal injection of capital, Bank 3 has probability $0.01$ to face the default event, see the lower Figure \ref{figBC}, for the representation of 100 simulations of Bank 3 in the case in which the LOLR is going to intervene. Let us underline that the simple case-study we analysed  has been setted to provide an example as clear as possible, nonetheless, because all the analytical results we derived are in closed form, general complex networks can be theoretically treated as well. Clearly, increasing the graph connection grade, we have an exponential growth in computing the quantities of interest.

\section{Conclusions}
In the present work, we have derived a closed form solution for an optimal control of interbank lending subject to specific  terminal probability constraints on the failure of a bank. The obtained result can be applied to a system of interconnected banks, providing a network solution.\\
We have also shown a simple and direct method to derive the relative importance of any {\it node}  within the studied network. We would like to underline that such a {\it ranking value} is fundamental in deciding the accepted probability of failure which modifies the final optimal strategy of a financial supervisor aiming at controlling the system to prevent {\it global crisis} as generalized default.\\
The results here presented  constitute a first step of a wider research program. In particular, in future works we shall consider 
%ts that aim at generalizing above results in several directions. First a 
sequence of {\it checking times} each of which characterized by possibly different constraints to be considered by the supervisor.
% considers different constraints at any time. 
In this setting, a solution can be obtained by a backward induction approach, see \cite{CDP,Pha}, applied to results here derived.
Moreover, as a further development we will  consider a framework where  the failure can happen continuously in time, hence imposing  strict constraints at any time before the terminal one $T$.

\appendix
\section{General framework for systemic risk in financial networks}\label{G:framework}
Let us first introduce the mathematical notation needed to properly treat the general financial scenario we are interested in. In particular, we consider a finite connected financial network identified with a graph $\GG$ composed by $n \in \NN$ vertices $\Ver_1,\dots,\Ver_n$, corresponding to $n$ banks, and $m \in \NN$ edges $e_1, \dots, e_m$ assumed to be normalized on the interval $[0,1]$, which represents interaction between the $n$ banks. In what follows we will use the Greeks letters $\alpha, \, \beta, \, \gamma=1,\dots,m$  to denote edges, whereas  $i, \, j, \, k = 1,\dots,n$, will denote vertexes. We refer to \cite{CDP3,CDP4,Mug}, for further details\\%a similar setting.\\
%In order to describe 
The structure of the graph is based on the  \textit{incidence matrix} $\Phi:= \Phi^+-\Phi^-$, where the sum is intended componentwise and $\Phi = \left (\phi_{i,\alpha}\right )_{n \times m}$, together with the \textit{incoming incidence matrix} $\Phi^+ = \left (\phi_{i,\alpha}^+\right )_{n \times m}$, and the \textit{outgoing incidence matrix} $\Phi^- = \left (\phi_{i,\alpha}^-\right )_{n \times m}$, where
% Above matrices are defined as
\[
\phi^+_{i,\alpha} = 
\begin{cases}
1 & \Ver_i = e_\alpha(0)\, ,\\
0 & \mbox{ otherwise } 
\end{cases} \, ,\quad 
\phi^-_{i,\alpha} = 
\begin{cases}
1 & \Ver_i = e_\alpha(1)\, ,\\
0 & \mbox{ otherwise }  \, .
\end{cases}
\]
In particular, we will say that the edge $e_\alpha$ is \textit{incident} to the vertex $\Ver_i$ if $|\phi_{i,\alpha}|=1$, so that
%We will then define the set
\[
\Gamma(\Ver_i) = \{\alpha \in \{1,\dots,m\} \, : \, |\phi_{i,\alpha}| =1\}\, ,
\]
represents the set of incident edges to the vertex $\Ver_i$. We also introduce the \textit{adjacency matrix} $\mathcal{I} = \left (\iota_{i,j}\right )_{n \times n}$, defined as $\mathcal{I} := \mathcal{I}^+ + \mathcal{I}^-$, where $\mathcal{I}^+=\left (\iota^+_{i,j}\right )_{n \times n}$, resp. $\mathcal{I}^- = \left (\iota^-_{i,j}\right )_{n \times n}$, is the \textit{incoming adjacency matrix}, resp. \textit{outgoing adjacency matrix}, defined as
\[
\iota^+_{i,j} = 
\begin{cases}
1 &  \mbox{ it exists } \, \alpha = 1,\dots, m \, : \, \Ver_j = e_\alpha (1) \,, \, \Ver_i = e_\alpha (0) \, ,\\
0 & \mbox{ otherwise }\, ,\\
\end{cases}
\]
\[
\iota^-_{i,j} = 
\begin{cases}
1 &  \mbox{ it exists } \, \alpha = 1,\dots, m \, : \, \Ver_j = e_\alpha (0) \,, \, \Ver_i = e_\alpha (1) \, .\\
0 & \mbox{ otherwise }\, .
\end{cases}
\]
Notice that since $\mathcal{I}^+ = (\mathcal{I}^-)^T$, then we have that $\mathcal{I}$ is symmetric with null entries on the main diagonal.\\

\section{Comparison with the paper by Capponi et al. \cite{Cap}}\label{C:cap}

As mentioned above, the financial setting has been mainly borrowed by \cite{Eis} as concerns the lending system formulation, and from \cite{Cap} for the optimal control problem with an external supervisor aiming at guaranteeing the overall sanity of the system.

This section is devoted to a comparison with \cite{Cap}. We stress that our assumptions on the optimal control are in the spirit of \cite{Cap}, in the sense that we consider failure at discrete times; also we will not consider a global optimal control, deriving a control for the whole time interval but rather we derive a series optimal control and then gluing together the resulting optimal controls. As mentioned we leave the optimal global control to future research being this latter point mathematically more demanding.

This comparison is significant since their work is based on a similar framework, namely a multi-period controlled system of banks, represented by a network, in which an outside entity, named LOLR, provides liquidity assistance loans to financially unstable banks in order to reduce the level of systemic risk within the whole network of banks. 
To analyze the systemic risk in interbank networks their work follows a clearing system framework consistent with bankruptcy laws. In particular they generalize the single period clearing system in the paper by Eisenberg and Thomas, see \cite{Eis}, by a multi-period controlled clearing payment system assuming limited liability of equity, priority over equity, and proportional repayments of liabilities after the default event. This generalization leads to a better insight in the propagation and aftershocks of defaults.
The main feature in \cite{Cap} is the comparison between two possible LOLR strategies:
\begin{itemize}
\item the \emph{Systemic Importance Driven} (SID) strategy, in which liquidity assistance is available only to banks considered systemically important, i.e. the banks whose default would cause significant losses to the financial system (because of their size, complexity and systemic interconnectedness);
\item the \emph{Max-Liquidity} (ML) strategy, in which the regulators aim to maximize the instantaneous total liquidity of the system.
\end{itemize}
By the analysis of these two different strategies they showed that the SID strategy is preferred when the network has a core-periphery structure, i.e. consisting of a dense cohesive core and a sparse, loosely connected periphery. This is due by the fact that the ML strategy increases the default probability for systematically important banks. Although these two strategies are simplified and do not consider the amount of capital that the LOLR has to inject in the banks network, nonetheless such comparison is useful because the numerical approach fits easily through simulations and systemic risk analysis.

Our work has some important similarities with the one by Capponi et al., in particular we also have considered a finite connected multi-period financial network representing the banks system and the assumptions guaranteeing the consistency with the bankruptcy laws. But, despite this, instead of comparing the two strategies, SID and ML, we considered a LOLR wishing to minimize the square of the lend resources over the probabilistic constraint. Therefore, we did not gave an initial budget at disposal to the LOLR as in \cite{Cap}, but took into consideration regulators aiming to find the loan control $\{\alpha^i(t)\}_{i=1,\dots,N,t\in[t_k,t_{k+1}]}$ minimizing the functional given by equation \eqref{EQN:CostFunc} for each time interval, i.e. $\forall k=1,\dots,M-1$, ensuring that the probability for each exogenous asset value to be greater than the default boundary is greater than a given constants $q^i$ for each bank $i\in\{1,\dots,N\}$.

Moreover, while \cite{Cap} is meant to compare two strategies for the LOLR, our approach follows a different path in searching the optimal budget consumption  to guarantee a prescribed level of safety of the financial network, given by  the parameters $q^i$ $i=1,\dots,N$. In particular, we do not assume strong constraint over the regulators budget, which depends on the default probability constraint parameters $q^i$. To switch on a similar comparison as in \cite{Cap}, i.e. considering banks networks of the type {\it core-periphery and baseline random networks}, and regulator policies of the type {\it SID and ML}, it suffices to fix the probability constraint depending on the systematic importance of the banks.
That is, banks whose failure would cause significant losses to the financial network, because of their size and systemic interconnectedness, should be endorsed with greater default probability parameters $q_i$. Therefore, our study provides an extension of the admissible policies, through considering an optimal control theory approach.

\end{document}